\definecolor{lightgray}{RGB}{224,224,224}
\newtheorem{theorem}{Theorem}
\newtheorem{corollary}{Corollary}
\newtheorem{remark}{Remark}
\newtheorem{proposition}{Proposition}
\newtheorem{lemma}{Lemma}
\begin{document}
\title{Maximal Achievable Service Rates of Codes and Connections to Combinatorial Designs}

\author{%
  \IEEEauthorblockN{Hoang Ly and Emina Soljanin}
  \IEEEauthorblockA{
  Rutgers University}
                    E-mail: \{\texttt{mh.ly;emina.soljanin}\}@rutgers.edu
}
\maketitle
\begin{abstract}
We investigate the service-rate region (SRR) of distributed storage systems that employ linear codes. We focus on systems where each server stores one code symbol, and a user recovers a data symbol by accessing any of its recovery groups, subject to per-server capacity limits. The SRR--the convex polytope of simultaneously achievable request rates--captures system throughput and scalability. We first derive upper and lower bounds on the maximum request rate of each data object. These bounds hold for all linear codes and depend only on the number of parity checks orthogonal to a particular set of codeword coordinates associated with that object, i.e., the equations used in majority-logic decoding, and on code parameters. 
We then check the bound saturation for 1) all non-systematic codes whose SRRs are already known and 2) systematic codes. For the former, we prove the bounds are tight. For systematic codes, we show that the upper bound is achieved whenever the supports of minimum-weight dual codewords form a 2-design. As an application, we determine the exact per-object demand limits for Hamming codes. Our framework offers a new perspective on addressing the SRR problem through the lens of combinatorial design theory. 
\end{abstract}
\section{Introduction}\label{intro}
Reliable and scalable data access is a key requirement in modern computing systems, particularly under dynamic and uneven demands. Traditional systems have relied mainly on basic replication strategies~\cite{SRR:journals/tit/AktasJKKS21}. The increasing disparity in data popularity has motivated hybrid redundancy schemes that combine erasure coding with replication. In this context, \emph{Service Rate Region} (SRR) has emerged as a foundational metric, quantifying the set of achievable request rates supported by a given redundancy configuration~\cite{SRR:journals/tit/AktasJKKS21,SRR:conf/isit/KazemiKS20}. The concept of SRR is closely related to load-balancing~\cite{LB:AktasFSW21}, and it generalizes the problem of batch codes~\cite{SRR:conf/isit/KazemiKSS20}. It is also connected to the problem of access balancing; see, e.g.,~\cite{MaxMinSum:DauM'18} and the follow-up work.

Previous work has characterized the SRRs of several prominent code families, including MDS, Simplex, first- and higher-order Reed–Muller codes, and locally recoverable codes; see~\cite{SRR:journals/tit/AktasJKKS21}, and more recently~\cite{SRR:lySV2025,SRR:preprint/arxiv/LySL25}. Surprisingly, however, the SRR of Hamming codes--among the earliest known linear block codes--has not yet been determined. In parallel, other studies have examined code parameters such as field size and block length needed to support a given SRR~\cite{CodeParameters:conf/isit/KilicRS24}. Despite these advances, several fundamental questions concerning the structure and limits of SRRs remain unresolved.

In distributed storage systems and SRR problems, each data object is retrieved through \emph{recovery sets} (or repair groups), which are collections of stored symbols that together enable the reconstruction of the desired object. This notion closely parallels that of \emph{majority-logic decoding} (MLD) in linear codes, where each message (or information) symbol is recovered using multiple (not necessarily disjoint) sets of codeword symbols. Each such set contributes a "vote" toward the actual value of the message symbol, and decoding succeeds if the majority of these votes remain uncorrupted in the presence of noise in the communication channel. Since errors in received codeword symbols affect all recovery sets that contain them, the effectiveness of MLD depends critically on the overlap structure among these sets. Codes that can be decoded using MLD are known as \emph{majority-logic decodable codes}. These codes are attractive in practice due to their inexpensive decoding circuitry and their potential to correct beyond their worst-case error bounds~\cite{Coding:books/PetersonW72}. 

The structure of the recovery sets, and thus the correction capability of MLD, is strongly tied to the combinatorics of the dual code, specifically to how codeword symbols appear in the supports of dual codewords. A fundamental result of Rudolph~\cite{MLD:Rudolph67} characterizes the error correction capability of MLD of codes whose dual codewords' supports exhibit certain combinatorial regularities. For some extensions of these results, see, e.g., \cite{MLD:Ng70,MLDComDesign:RahmanB75,LySoljanin2025OneStep}. This paper explicitly connects SRR analysis with MLD decoding and combinatorial design theory. We thus demonstrate how a linear code's combinatorial structure and error-correcting properties significantly impact its data access performance in storage systems. 

One of the central problems in SRR analysis is to design storage schemes that cover a set of target service-rate points using minimal resources~\cite{SRR:journals/tit/AktasJKKS21}. Among the most critical of these are the maximum individual service rates, the highest demand that can be served for each data object when the entire system is dedicated to that object, that is, when all other demands are set to zero. We refer to these points as the \emph{maximal achievable rates} for individual data objects. Characterizing them is particularly relevant in distributed storage systems, where skewed demand or object popularity is common~\cite{SRR:journals/tit/AktasJKKS21}.

This paper is organized as follows. Section II formally defines the SRR problem and the maximum achievable rate for each data object, and summarizes our main contributions. Section~III introduces majority-logic decoding and establishes bounds on the maximal achievable rate using orthogonal parity checks. In Section~IV, we specialize these results into systematic codes and identify the conditions under which the upper bound is achieved, drawing connections to combinatorial designs. Section~V concludes the paper. 
\section{Problem Statement}\label{sec:Problem_statement}
In this section, we first introduce the notation used throughout the paper, followed by basic definitions and relevant results from combinatorial design theory. The results we include are classical and well known, and can be found in standard references such as~\cite{CombiDesigns:books/daglib/Stinson04}. We then formally describe and formulate the SRR problem, and the maximal achievable rates of data objects, and conclude with a summary of the main results.
\subsection{Preliminaries and Notation}
The finite field over a prime or prime power \(q\) is denoted as \(\mathbb{F}_q\). A \(q\)-ary linear code \(\mathcal{C}\) with parameters \([n, k, d]_q\) is a \(k\)-dimensional subspace with minimum (Hamming) distance $d$ of the \(n\)-dimensional vector space \(\mathbb{F}_q^n\). Also, $d^{\perp}$ denotes the dual distance, i.e., minimum distance of the dual code $\mathcal{C}^{\perp}$. Hamming weight of a codeword \(\boldsymbol{x}\) is denoted as \(\textsf{w}(\boldsymbol{x})\). The symbols \(\boldsymbol{0}_k\) and \(\boldsymbol{1}_k\) denote the all-zero and all-one column vectors of length \(k\), respectively. Standard basis (column) vector with a one at position \(i\) and 0s elsewhere is represented by \(\boldsymbol{e}_i\), and its transpose by $\boldsymbol{e}_i^{\top}$. Moreover, $\boldsymbol{c}^j(\boldsymbol{G})$ and $\boldsymbol{h}^i(\boldsymbol{G})$ denote respectively the $j$-th column and the $i$-th row of the matrix $\boldsymbol{G}$ being considered, and $\boldsymbol{G}$ can be dropped when the context is clear. \(\mathrm{Supp}(\boldsymbol{x})\) denotes the support of a codeword \(\boldsymbol{x}\), and $x_j$ its $j$-th coordinate. The set of positive integers not exceeding \(i\) is denoted as \([i]\). Similarly, \([a, b]\) represents the set of integers between \(a\) and \(b\), where \(a, b \in \mathbb{N}\) and \(a < b\). The cardinality of a finite set $V$ is $|V|$. 

A pair \( (V, \mathcal{A}) \) is called a \( t\text{--}(n,k,\lambda) \) block design (or simply, a \( t \)-design) if it satisfies the following: \( V \) is a set of \( n \) elements (called \emph{points}), and \( \mathcal{A} \) is a collection of \( k \)-element subsets of \( V \) (called \emph{blocks}), such that every \( t \)-subset of \( V \) is contained in exactly \( \lambda \) blocks in \( \mathcal{A} \). A \emph{Steiner system} is a $t$-$(n,k,\lambda)$ block design with $t=2,\, \lambda=1$. A fundamental result involving designs is the "Lower-order balance", stated as follows:
\begin{lemma}[Lower-order balance; see, e.g.,{~\cite[Corollary 9.6]{CombiDesigns:books/daglib/Stinson04}}]
    Suppose that $(V, \mathcal{A})$ is a $t$-design, and $1 \le s \le t$. Then $(V, \mathcal{A})$ is a $s$-design.
    \label{thm:lower_balance}
\end{lemma}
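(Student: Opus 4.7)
The plan is to show that for each fixed $s$ with $1 \le s \le t$, there is a constant $\lambda_s$ (depending only on $n,k,t,\lambda,s$) such that every $s$-subset of $V$ lies in exactly $\lambda_s$ blocks. The natural route is a double-counting argument. Fix an arbitrary $s$-subset $S \subseteq V$ and count the set
\[
  \mathcal{N}(S) \;=\; \bigl\{\,(T,A) : S \subseteq T \subseteq A,\ |T|=t,\ A \in \mathcal{A}\,\bigr\}.
\]
I would compute $|\mathcal{N}(S)|$ in two ways.

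First, group by $T$. The $t$-subsets $T$ of $V$ containing $S$ number $\binom{n-s}{t-s}$, and by the $t$-design property each such $T$ is contained in exactly $\lambda$ blocks. Hence
\[
  |\mathcal{N}(S)| \;=\; \lambda \binom{n-s}{t-s}.
\]
Second, group by $A$. Let $\lambda_s(S)$ denote the (a priori $S$-dependent) number of blocks containing $S$. For each such block $A$, the $t$-subsets $T$ with $S \subseteq T \subseteq A$ are obtained by choosing the remaining $t-s$ elements from $A \setminus S$, giving $\binom{k-s}{t-s}$ choices. Thus
\[
  |\mathcal{N}(S)| \;=\; \lambda_s(S)\,\binom{k-s}{t-s}.
\]
Equating the two expressions yields
\[
  \lambda_s(S) \;=\; \lambda\,\frac{\binom{n-s}{t-s}}{\binom{k-s}{t-s}},
\]
which does not depend on the particular $s$-subset $S$. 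Denoting this common value by $\lambda_s$, we conclude that $(V,\mathcal{A})$ is an $s$-$(n,k,\lambda_s)$ design, as required.

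\textbf{Main obstacle.} There is no substantive obstacle; the argument is elementary double counting. The only care needed is to verify that the binomial $\binom{k-s}{t-s}$ is nonzero so that division is legitimate, which follows from $s \le t \le k$ (implicit in the definition of a $t$-design, since blocks have size $k$ and must contain $t$-subsets). Note also that the formula automatically recovers $\lambda_t = \lambda$ when $s=t$, so the argument is consistent at the boundary.
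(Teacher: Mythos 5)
Your double-counting argument is correct and is precisely the standard textbook proof of this fact; the paper itself does not reprove the lemma but cites it directly from Stinson (Corollary 9.6), where essentially this same argument appears. Your care about $\binom{k-s}{t-s} \neq 0$ via $s \le t \le k$ (implicit in the definition since each block of size $k$ must contain at least one $t$-subset) is the right point to flag, and the sanity check $\lambda_t = \lambda$ is a nice touch.
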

That is, any $t$-design is automatically balanced at all lower levels; in particular, it is always a $(t-1)$-design.
\subsection{Service Rates of Codes}
Consider a storage system in which $k$ data objects $o_1, \hdots, o_k$ are stored on $n$ servers, labeled $1, \hdots, n$, using a linear $[n, k]_q$ code with generator matrix $\boldsymbol{G}_{k\times n} \in \mathbb{F}^{k\times n}_q$. Let $\boldsymbol{c}^j$ denote the $j$-th column of $\boldsymbol{G}$, for $j \in [n]$. A recovery set for the object $o_i$ is a set of stored symbols that can be used to recover $o_i$. For a storage system using $\boldsymbol{G}$, a set $R \subseteq [n]$ is a \textit{recovery set} for $o_i$ if $\boldsymbol{e}_i \in \text{span}(R) := \text{span}(\cup_{j \in R}\{\boldsymbol{c}^j\})$, i.e., the unit vector $\boldsymbol{e}_i$ can be recovered by a linear combination of the columns of $\boldsymbol{G}$ indexed by $R$. WLOG, we restrict our attention to \textit{minimal} recovery sets $R: \boldsymbol{e}_i \notin \text{span}(S),\, \forall\, S \subsetneq R$, which ensures that we never download more symbols than necessary to recover a data object.

Let $\mathcal{R}_i = \{R_{i, 1}, \hdots, R_{i, t_i}\}$ be the $t_i$ recovery sets for the object $o_i$. Denote by ${\lambda_i},\, \forall\, i \in [k]$ the arrival rate of requests (or demands) to download an object $o_i$, and as $\boldsymbol{\lambda} = (\lambda_1, \hdots, \lambda_k) \in \mathbb{R}_{\ge 0}^k$ the request rates for all objects. Let $\mu_{l} \in \mathbb{R}_{\ge 0}$ be the average rate at which the server $l\in [n]$ processes requests for data objects. We denote the serving rates (or serving capacities) of servers \( 1, \ldots, n \) by the vector \( \boldsymbol{\mu} = (\mu_1, \ldots, \mu_n) \), and assume that servers have \emph{uniform capacity}; that is, \( \mu_j = 1, \ \forall\, j \in [n] \), or equivalently, \( \boldsymbol{\mu} = (\boldsymbol{1}_n)^{\top} \).

A request for an object can be assigned to any of its possibly overlapping recovery sets. Let $\lambda_{i,j}$ be the portion of requests for object $o_i$ that are assigned to the recovery set $R_{i,j}, j \in [t_i]$. The service rate region (SRR) $\mathcal{S}(\boldsymbol{G}, \boldsymbol{\mu}) = \mathcal{S}(\boldsymbol{G}, \boldsymbol{1}) \subset \mathbb{R}_{\ge0}^k$ is defined as the set of all request vectors $\boldsymbol{\lambda}$ that can be served by a coded storage system with generator matrix $\boldsymbol{G}$ and serving rate vector $\boldsymbol{\mu}$. Alternatively, $\mathcal{S}(\boldsymbol{G}, \boldsymbol{\mu})$ can be defined as the set of all vectors $\boldsymbol{\lambda}$ for which there exist $\lambda_{i,j} \in \mathbb{R}_{\ge 0},\ i \in [k]$ and $j \in [t_i]$, satisfying the followings:
\begin{align}
        \sum_{j=1}^{t_i}\lambda_{i, j} & \;=\; \lambda_i, \quad \forall\, i \in [k],\\   
        \sum_{i=1}^{k}\sum_{\substack{j=1 \\ l\, \in\, R_{i, j}}}^{t_i}\lambda_{i, j} &\;\le\; \mu_{l}, \quad \forall\, l\, \in\, [n]\\
        \lambda_{i, j} &\;\in\; \mathbb{R}_{\ge 0}, \quad \forall\, i \in [k],\ j \in [t_i].
\end{align}
Constraints (1) guarantee that the demands for all objects are satisfied, and constraints (2) ensure that no server receives requests at a rate larger than its capacity. Such vectors $\boldsymbol{\lambda}$ form a \textit{polytope} in $\mathbb{R}_{\ge 0}^k$. Therefore, the SRR is also referred to as the service polytope. A key property of the service polytope is that it is convex, as illustrated in the following.
\begin{lemma}
[\hspace{-0.1mm}{\cite[Lem. 1]{SRR:conf/isit/KazemiKS20}}]\label{lem:convexity}
The service rate region $\mathcal{S}(\boldsymbol{G}, \boldsymbol{1}))$ is a non-empty, convex, closed, and bounded subset of $\mathbb{R}_{\ge 0}^k$.
\end{lemma}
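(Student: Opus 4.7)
The plan is to realize $\mathcal{S}(\boldsymbol{G}, \boldsymbol{1})$ as the image, under the linear map $\pi: (\lambda_{i,j}) \mapsto (\sum_{j=1}^{t_i}\lambda_{i,j})_{i \in [k]}$, of the polyhedron $P \subset \mathbb{R}_{\ge 0}^{t_1 + \cdots + t_k}$ carved out by the capacity constraints (2) together with the non-negativity constraints (3). Under this viewpoint, each of the four properties reduces to a short check on $P$ and its projection. Non-emptiness is immediate: the all-zero allocation $\lambda_{i,j} \equiv 0$ lies in $P$ and projects to $\boldsymbol{0}_k \in \mathcal{S}(\boldsymbol{G}, \boldsymbol{1})$.

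For convexity, I would take any two request vectors $\boldsymbol{\lambda}, \boldsymbol{\lambda}'$ in $\mathcal{S}(\boldsymbol{G}, \boldsymbol{1})$ with witnessing allocations $(\lambda_{i,j}), (\lambda_{i,j}') \in P$, and check that for every $\alpha \in [0,1]$ the mixture $\alpha (\lambda_{i,j}) + (1-\alpha)(\lambda_{i,j}')$ still satisfies all capacity and non-negativity constraints, because each capacity sum becomes a convex combination of two numbers already bounded above by $\mu_l = 1$. The image of this mixture under $\pi$ is precisely $\alpha\boldsymbol{\lambda} + (1-\alpha)\boldsymbol{\lambda}'$, which therefore lies in $\mathcal{S}(\boldsymbol{G}, \boldsymbol{1})$.

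For closedness, I would invoke the classical fact that the projection of a polyhedron onto a coordinate subspace is again a polyhedron, by Fourier--Motzkin elimination; hence $\mathcal{S}(\boldsymbol{G}, \boldsymbol{1}) = \pi(P)$ is itself cut out by finitely many linear inequalities, and is in particular closed. For boundedness, summing all $n$ constraints of (2) yields $\sum_{i,j} |R_{i,j}|\,\lambda_{i,j} \le \sum_l \mu_l = n$, and since every (minimal) recovery set satisfies $|R_{i,j}| \ge 1$, this gives $\sum_i \lambda_i = \sum_{i,j} \lambda_{i,j} \le n$. Thus $\mathcal{S}(\boldsymbol{G}, \boldsymbol{1})$ is confined to the simplex $\{\boldsymbol{\lambda} \in \mathbb{R}^k_{\ge 0} : \sum_i \lambda_i \le n\}$.

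The only step with real subtlety is closedness: one cannot merely argue that $\mathcal{S}(\boldsymbol{G}, \boldsymbol{1})$ is the image of a closed set under a continuous map, since projections of closed sets are not closed in general. The polyhedral structure of $P$ is what rescues the argument. Everything else is direct bookkeeping on the defining constraints.
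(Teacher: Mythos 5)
Your proof is correct, and you have correctly identified the one place where care is needed. The paper itself does not prove this lemma---it cites it directly from~\cite[Lem.~1]{SRR:conf/isit/KazemiKS20}---so there is no in-text proof to compare against, but your argument is the standard one for statements of this type and is complete: realizing $\mathcal{S}(\boldsymbol{G},\boldsymbol{1})$ as the linear image $\pi(P)$ of the allocation polyhedron $P$, non-emptiness via the zero allocation, convexity because linear constraints are preserved under convex combination and $\pi$ is linear, closedness via the fact that linear images of polyhedra are polyhedra (Fourier--Motzkin), and boundedness by summing the $n$ server constraints and using $|R_{i,j}|\ge 1$ to obtain $\sum_i \lambda_i \le n$. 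Your remark that closedness cannot be dispatched by ``continuous image of a closed set is closed'' is exactly right; this is a genuine pitfall in projection arguments, and invoking polyhedrality is the correct fix.
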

An analysis of the geometric structure and properties of service rate regions, including their volume, can be found in~\cite{Service:journals/siaga/AlfaranoKRS24}.
\subsection{Maximal Achievable Individual Service Rates}
For each $\ell\in[k]$ define the \emph{coordinate-wise maximum}
\[
   \lambda_{\ell}^{\max}\; :=\;
   \max\bigl\{\lambda_\ell \mid \boldsymbol{\lambda}\in\mathcal S(\boldsymbol G,\boldsymbol 1)\bigr\},
\]
i.e., the largest request rate for object $o_\ell$ achievable while other objects may also receive traffic. Next, we define the \emph{axis intercept}
\[
   \lambda_{\ell}^{\mathrm{int}}\;:=\;
   \max\bigl\{\gamma\in\mathbb R_{\ge0} \mid \gamma\,\boldsymbol e_\ell\in\mathcal S(\boldsymbol G,\boldsymbol 1)\bigr\},
\]
i.e., the maximum rate for $o_\ell$ achievable when all other demands are zero. It was proved in~\cite{SRR:lySV2025} that
\(\lambda_{\ell}^{\mathrm{int}}=\lambda_{\ell}^{\max}\). Consequently, dedicating the entire system to object~$o_\ell$ permits serving at most \(\lambda_{\ell}^{\max}\) requests. We refer to this quantity as the \emph{maximal achievable rate} (or maximal achievable rate) for object \( \ell \). 

We form the simplex
\[
   \Delta_{\max}\;:=\;
   \operatorname{conv}\bigl\{
      \boldsymbol{0}_k,\;
      \lambda_{1}^{\max}\boldsymbol e_1,\dots,
      \lambda_{k}^{\max}\boldsymbol e_k
   \bigr\},
\]
which is the convex hull of the origin and the $k$ axis-intercept vertices. Because $\mathcal S(\boldsymbol G,\boldsymbol 1)$ is convex (Lemma~\ref{lem:convexity}), \(\Delta_{\max}\subseteq\mathcal S(\boldsymbol G,\boldsymbol 1)\). No larger axis-aligned simplex lies inside the SRR, so we call $\Delta_{\max}$ the \emph{maximal achievable simplex}, in which every point is achievable.

\subsection{Summary of Contributions}
This paper focuses on the maximal achievable rate for each individual object, denoted \( \lambda_j^{\max} \), and establishes both bounds and achievability conditions for these rates. 
We also reveal connections between the SRR problem, MLD, and combinatorial design theory. Briefly, the contributions of this work are the following:
\begin{enumerate}
    \item We establish upper and lower bounds on the maximal achievable rate for each data object in a system encoded by a generator matrix \( \boldsymbol{G} \), expressed in terms of the number of parity checks orthogonal to its minimal recovery set and the parameters of the underlying code. Unlike previously established bounds that apply only to specific codes (e.g., Reed--Muller codes) or code families (e.g., systematic codes), our result holds for all binary linear codes (Theorem~\ref{thm:general_bound_demand}).

    \item We specialize these bounds to the case where \( \boldsymbol{G} \) is systematic and show that the upper bound is achieved when the minimum-weight dual codewords satisfy a specific structural condition. This result reveals a direct connection between the maximal achievable rate in the SRR problem and the error-correction capability of Majority-logic decoding applied to the underlying code (Theorem~\ref{thm:systematic} and Corollary~\ref{coro:2desgin}).
    \item We verify that the bounds are tight for all codes whose SRR has previously been characterized in the literature (Remark~\ref{remark:verify} and Section~\ref{sec:Systematic}.B).
    \item We extend the analysis to binary Hamming codes, providing the first analysis of their service rate region (Theorem~\ref{thm:Hamming_demand}). 
\end{enumerate}

\section{Majority-logic Decoding}
In this section, we provide essential background on MLD, which will be critical for and repeatedly used in the sections that follow. Readers familiar with MLD may skip this section.

We adopt standard notations used in~\cite{Coding:books/PetersonW72,Coding:books/MacWilliamsS77}. MLD is simple to implement, and also effective for several classes of codes, including Reed--Muller codes~\cite{Coding:books/PetersonW72}, and some finite geometry codes~\cite{MLDecoding:journals/tit/CruzW21}. We illustrate its fundamental principle using the \([15, 4, 8]\) Simplex code $\mathcal{C}$, whose generator matrix is
\[
\boldsymbol{G}_S =
\scalebox{0.93}{$
\left[\begin{array}{@{\hskip 0.2pt}*{15}{c}@{\hskip 0.4pt}}
0 & 0 & 0 & 0 & 0 & 0 & 0 & 1 & 1 & 1 & 1 & 1 & 1 & 1 & 1\\
0 & 0 & 0 & 1 & 1 & 1 & 1 & 0 & 0 & 0 & 0 & 1 & 1 & 1 & 1\\
0 & 1 & 1 & 0 & 0 & 1 & 1 & 0 & 0 & 1 & 1 & 0 & 0 & 1 & 1\\
1 & 0 & 1 & 0 & 1 & 0 & 1 & 0 & 1 & 0 & 1 & 0 & 1 & 0 & 1
\end{array}\right],
$}
\]
and a corresponding parity-check matrix is
\[
\boldsymbol{H}_S =
\scalebox{0.93}{$
\left[\begin{array}{@{\hskip 0.3pt}*{15}{c}@{\hskip 0.2pt}}
1 & 0 & 0 & 0 & 0 & 0 & 0 & 0 & 0 & 0 & 0 & 1 & 1 & 0 & 0\\
1 & 1 & 0 & 0 & 0 & 0 & 0 & 0 & 0 & 0 & 0 & 0 & 1 & 1 & 0\\
1 & 1 & 1 & 0 & 0 & 0 & 0 & 0 & 0 & 0 & 0 & 0 & 0 & 0 & 0\\
1 & 0 & 0 & 1 & 0 & 0 & 0 & 0 & 0 & 0 & 0 & 0 & 0 & 1 & 0\\
1 & 0 & 0 & 0 & 1 & 0 & 0 & 0 & 0 & 0 & 0 & 0 & 1 & 0 & 1\\
1 & 0 & 0 & 0 & 0 & 1 & 0 & 0 & 0 & 0 & 0 & 1 & 0 & 0 & 1\\
1 & 0 & 0 & 0 & 0 & 0 & 1 & 0 & 0 & 0 & 0 & 0 & 0 & 0 & 1\\
1 & 0 & 0 & 0 & 0 & 0 & 0 & 1 & 0 & 0 & 0 & 1 & 1 & 1 & 1\\
1 & 0 & 0 & 1 & 0 & 0 & 1 & 0 & 1 & 0 & 0 & 0 & 1 & 0 & 0\\
1 & 0 & 0 & 0 & 0 & 0 & 0 & 0 & 0 & 1 & 0 & 1 & 0 & 1 & 1\\
1 & 0 & 0 & 0 & 0 & 0 & 0 & 0 & 0 & 0 & 1 & 0 & 0 & 1 & 1
\end{array}\right].
$}
\]
The code maps information vector $\boldsymbol{a} = [a_1, a_2, a_3, a_4]$ into codeword $\boldsymbol{x} = [x_1, x_2, \dots , x_{13}, x_{14}, x_{15}]$ via $\boldsymbol{x} = \boldsymbol{a}\cdot \boldsymbol{G}_S$. Note that for this specific generator matrix $\boldsymbol{G}_S$, the first column is $\boldsymbol{c}^1 = [0, 0, 0, 1]^\top = \boldsymbol{e}_4$. Therefore, the fourth message symbol $a_4$ is directly mapped to the first codeword symbol, i.e., 
\begin{align}\label{eq:MessageSymbol}
a_4 = \boldsymbol{a}\cdot \boldsymbol{e}_4 = \boldsymbol{a}\cdot \boldsymbol{c}^1 = x_1.
\end{align}
The parity-check matrix \( \boldsymbol{H}_S \) defines constraints on any codeword \( \boldsymbol{x} \in \mathcal{C} \) via the relation \( \boldsymbol{H}_S \cdot \boldsymbol{G}_S^\top = \boldsymbol{0}_{11 \times 4} \). Since \( \boldsymbol{H}_S \) serves as a generator matrix for the dual code \( \mathcal{C}^\perp \), its row space contains all dual codewords of \( \mathcal{C} \). The dual code has dimension \( 15 - 4 = 11 \), so there are \( 2^{11} \) such codewords, each corresponding to a parity-check equation on codewords of \( \mathcal{C} \). Consider the four rows 1, 3, 4, 7 of $\boldsymbol{H}_S$ that define the following parity checks (over the binary field $\mathbb{F}_2$)
\begin{align}
\boldsymbol{G}_S\cdot(\boldsymbol{h}^1)^{\top} = \boldsymbol{0}_4, \, \text{or equally, }\, \boldsymbol{c}^{1}+\boldsymbol{c}^{12}+\boldsymbol{c}^{13} \;& =\; \boldsymbol{0}_4\notag\\ \text{or, }\, x_1 + x_{12} + x_{13} &= 0 \label{eq:pc1}\\[2pt]
\boldsymbol{G}_S\cdot(\boldsymbol{h}^3)^{\top} = \boldsymbol{0}_4, \, \text{or equally, }\, \boldsymbol{c}^{1}+\boldsymbol{c}^{2}+\boldsymbol{c}^{3} \;&=\; \boldsymbol{0}_4\notag\\ \text{or, }\, x_1 + x_{2} + x_{3} &= 0 \label{eq:pc3}\\[2pt]
\boldsymbol{G}_S\cdot(\boldsymbol{h}^4)^{\top} = \boldsymbol{0}_4, \, \text{or equally, }\, \boldsymbol{c}^{1}+\boldsymbol{c}^{4}+\boldsymbol{c}^{14} &= \;\boldsymbol{0}_4 \notag\\
\text{or, }\, x_1 + x_{4} + x_{14} &= 0 \label{eq:pc4}\\[2pt]
\boldsymbol{G}_S\cdot(\boldsymbol{h}^7)^{\top} = \boldsymbol{0}_4, \, \text{or equally, }\, \boldsymbol{c}^{1}+\boldsymbol{c}^{7}+\boldsymbol{c}^{15} &= \;\boldsymbol{0}_4 \notag\\
\text{or, }\, x_1 + x_{7} + x_{15} &= 0 \label{eq:pc7}.
\end{align}
From the four equations~\eqref{eq:pc1}--\eqref{eq:pc7} we can derive five independent estimates, or "votes", for the symbol $a_4$:
\begin{align}
a_4 &\;=\; x_1 \quad &(\text{from } \eqref{eq:MessageSymbol}) \label{eq:vote0} \\
&\;=\; x_{12} + x_{13} \quad &(\text{from } \eqref{eq:pc1}) \label{eq:vote1} \\
&\;=\; x_{2} + x_{3} \quad &(\text{from } \eqref{eq:pc3}) \label{eq:vote2} \\
&\;=\; x_4 + x_{14} \quad &(\text{from } \eqref{eq:pc4}) \label{eq:vote3} \\
&\;=\; x_7 + x_{15} \quad &(\text{from } \eqref{eq:pc7}) \label{eq:vote4}
\end{align}
Observe equations \eqref{eq:pc1},~\eqref{eq:pc3}, \eqref{eq:pc4}, and \eqref{eq:pc7}. The symbol $x_1$ participates in all four checks. Crucially, among the other symbols involved ($\{x_{12}, x_{13}\}, \{x_{2}, x_{3}\}$, $\{x_4, x_{14}\}$, $\{x_7, x_{15}\}$), each symbol appears in exactly one of these four checks. This property is called orthogonality (defined formally below).

If the received vector $\boldsymbol{y}$ differs from the transmitted codeword $\boldsymbol{x}$ by at most two errors, the orthogonality of the parity checks ensures that each error affects at most one of the check sums on the right-hand side of \eqref{eq:vote0}–\eqref{eq:vote4}.

\medskip
\noindent
\textit{Example.} Suppose that $y_4$ and $y_2$ are flipped, and all other symbols are received correctly. Then
\begin{itemize}
  \item $y_4 + y_{14}$ and $y_2 + y_{3}$ will disagree with $y_1$;
  \item $y_{12} + y_{13}$ and $y_7 + y_{15}$ will still match $y_1$.
\end{itemize}

A majority vote among the five estimates 
\[
y_1,\quad y_{12} + y_{13},\quad y_2 + y_{3},\quad y_4 + y_{14},\quad y_7 + y_{15}
\]
can correctly recover the transmitted symbol $a_4$. In the event of a tie, the decoder may favor $0$.

The same idea applies to other message symbols, each using an appropriate subset of parity checks drawn from the full set of $2^{11}$ available.

Since all these parity checks and majority votes can be computed in parallel over the received codeword, the procedure is referred to as \textbf{one-step majority-logic decoding}--in contrast to the sequential, multi-step Reed decoding algorithm~\cite{Coding:books/PetersonW72}.

\medskip
\noindent
Formally, we define orthogonality in terms of the index sets involved in the parity checks. For any index set \( \mathcal{I} \subseteq [n] \), its \emph{incidence vector} \( \boldsymbol{\chi}(\mathcal{I}) \in \{0,1\}^n \) is defined as
\[
\boldsymbol{\chi}(\mathcal{I})_j \;=\;
\begin{cases}
1 & \text{if } \boldsymbol{c}^j \in \mathcal{I}, \\
0 & \text{otherwise},
\end{cases}
\]
where \( \boldsymbol{c}^j \) denotes the \( j \)-th column of the generator matrix~$\boldsymbol{G}$.

A set of indices \( \mathcal{I} \subseteq [n] \) defines a parity-check equation if the sum of the corresponding columns is zero, i.e.,
\[
\sum_{j \in \mathcal{I}} \boldsymbol{c}^j \;=\; \boldsymbol{G} \cdot \boldsymbol{\chi}(\mathcal{I})^\top \;=\; \boldsymbol{0}_k,
\]
which generalizes equations~\eqref{eq:pc1}–\eqref{eq:pc7}.

Consider $J$ parity-check equations defined by index sets $\mathcal{I}_1, \mathcal{I}_2, \dots, \mathcal{I}_J$. These checks are said to be \textit{orthogonal} on a set of indices $\mathcal{O}$ if every index in $\mathcal{O}$ is involved in every check $\mathcal{I}_i$, and no index \textit{outside} $\mathcal{O}$ is involved in more than one check $\mathcal{I}_i$. Mathematically,
\[
\mathcal{I}_i \;\cap\; \mathcal{I}_j \;=\; \mathcal{O}, \quad \text{for all distinct } i, j \in [J].
\]
In our example, the four chosen parity checks correspond to index sets $\mathcal{I}_1 = \{1, 12, 13\}$, $\mathcal{I}_2 = \{1, 2, 3\}$, $\mathcal{I}_3 = \{1, 4, 14\}$, and $\mathcal{I}_4 = \{1, 7, 15\}$. We have $J=4$, and these sets are orthogonal on $\mathcal{O} = \{1\}$ since $\mathcal{I}_i \cap \mathcal{I}_j = \{1\}$ for $i \neq j$.

Orthogonal check sums are central to MLD, as captured by the following result.
\begin{proposition}[see e.g., {\cite[Ch.~13]{Coding:books/MacWilliamsS77}}]
If there are $J$ parity checks orthogonal on every coordinate, the code can correct $\lfloor J/2 \rfloor$ or fewer errors using one-step MLD.
\label{prop:MJDecoding}
\end{proposition}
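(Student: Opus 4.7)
The plan is to fix an arbitrary coordinate $i$, produce $J+1$ estimates of $x_i$ whose error-corruption I can bound via a direct counting argument from orthogonality, and then iterate the argument over all coordinates.

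First I would let $\mathcal{I}_1,\ldots,\mathcal{I}_J$ denote the $J$ parity checks orthogonal on coordinate $i$, and set $S_l := \mathcal{I}_l\setminus\{i\}$. By the orthogonality hypothesis $\mathcal{I}_l\cap \mathcal{I}_{l'}=\{i\}$ for $l\neq l'$, so the sets $S_1,\ldots,S_J$ are pairwise disjoint and none contains $i$. From the parity-check relation $\sum_{j\in\mathcal{I}_l}x_j = 0$ I rewrite $x_i = \sum_{j\in S_l} x_j$ over $\mathbb{F}_2$. Writing the received word as $\boldsymbol{y} = \boldsymbol{x} + \boldsymbol{e}$, this yields, for each $l\in[J]$, the check-based estimate
\[
\hat{x}_i^{(l)} \;:=\; \sum_{j\in S_l} y_j \;=\; x_i + \sum_{j\in S_l} e_j,
\]
which I would combine with the raw estimate $\hat{x}_i^{(0)} := y_i = x_i + e_i$, for a total of $J+1$ candidate values of $x_i$.

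The key step is a pigeonhole-style observation: a single erroneous bit $e_j=1$ can corrupt at most one of these $J+1$ estimates. Indeed, if $j=i$, only $\hat{x}_i^{(0)}$ is affected; if $j\neq i$, orthogonality forces $j$ to lie in at most one $S_l$, so at most one $\hat{x}_i^{(l)}$ flips. Hence, if $\boldsymbol{e}$ has Hamming weight at most $\lfloor J/2 \rfloor$, then at most $\lfloor J/2 \rfloor$ of the $J+1$ estimates disagree with $x_i$, leaving at least $\lceil J/2\rceil + 1 > (J+1)/2$ estimates equal to $x_i$. A majority vote therefore returns $x_i$ correctly, and since the argument is purely coordinate-wise, applying it at every $i\in[n]$ recovers the whole transmitted codeword.

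I do not expect any deep obstacle — the proof is essentially combinatorial pigeonhole — but the one place to take care is the off-by-one between ``$J$ parity checks'' and ``$J+1$ votes'': incorporating the raw received symbol $y_i$ as an auxiliary vote is precisely what makes the bound $\lfloor J/2\rfloor$ tight, whereas majority-voting on the $J$ check-based estimates alone would only guarantee $\lfloor (J-1)/2\rfloor$ error correction. I would state this explicitly so that the tie-breaking rule (favoring $0$, as mentioned in the example) is consistent with the counting.
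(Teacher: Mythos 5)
Your proof is correct and is essentially the standard argument that the paper sketches in prose with its $[15,4,8]$ Simplex-code example (votes~\eqref{eq:vote0}--\eqref{eq:vote4} and the ``each error affects at most one check sum'' observation); the paper itself defers to MacWilliams--Sloane for the formal proof, and your write-up is exactly that textbook argument made precise. The counting is sound: the $J$ disjoint sets $S_l$ plus the raw vote $y_i$ give $J+1$ estimates, each error corrupts at most one, so with $\textsf{w}(\boldsymbol{e})\le\lfloor J/2\rfloor$ at least $\lceil J/2\rceil+1>(J+1)/2$ estimates remain correct, and since the orthogonality hypothesis holds at every coordinate the whole codeword is recovered.
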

\textit{Beyond the guaranteed bound.} MLD can often succeeds even when more than $\lfloor J/2 \rfloor$ errors occur~\cite{Coding:books/MacWilliamsS77}. For instance, if exactly four bits $y_2$, $y_3$, $y_4$, $y_{14}$ are flipped, all the parity-check votes for $a_4$ remain correct.

\begin{remark}
Note that if $J$ parity checks are \emph{orthogonal} on a set $\mathcal{O}$, then any subset of size $s \le J$ is also orthogonal on $\mathcal{O}$. However, Proposition~\ref{prop:MJDecoding} shows that error-correction capability improves with a larger number of orthogonal checks. Thus, for a given coordinate, the key quantity governing MLD is the maximum number~$J$ of parity checks orthogonal on that coordinate. Throughout this paper, any reference to parity checks orthogonal to a coordinate implicitly refers to a \emph{maximal} such set--i.e., one of the largest possible size.
\end{remark}
\section{MLD and SRR's Axis Intercept Points}\label{sec:SRR_MLD}
We show how the MLD orthogonal checks correspond to recovery sets for data objects in the SRR problem, enabling tight bounds on the maximum achievable rate for each object. We state the concept for binary codes; generalization to linear codes over any finite field is straightforward.

Now consider $\mathcal{O}$, a smallest recovery set for object ${\ell} \in [k]$. In other words, $\mathcal{O}$ is a set of minimal cardinality such that
\[
\sum\limits_{j \in \mathcal{O}}\,\boldsymbol{c}^j \,=\, \boldsymbol{e}_{\ell}.
\]
Let $a = |\mathcal{O}|$ denote the size of this smallest recovery set. Let $\mathcal{I}_1, \dots, \mathcal{I}_{J_{\mathcal{O}}}$ be the index sets of $J_{\mathcal{O}}$ parity-check sums such that these checks are orthogonal on the set $\mathcal{O}$. (Here, $J_{\mathcal{O}}$ denotes the maximum number of such checks orthogonal on this specific set $\mathcal{O}$.) Recall that for any $\mathcal{I}_{i}$, the definition of the parity check implies 
\[
\sum_{j \in \mathcal{I}_i}\, \boldsymbol{c}^j \,=\, \boldsymbol{G}\cdot\boldsymbol{\chi}(\mathcal{I}_i)^{\top} \,=\, \boldsymbol{0}_k.
\]

Then, we have for each $i \in [J_{\mathcal{O}}]$
\[
\sum\limits_{j \in \mathcal{I}_i\setminus\mathcal{O}}\boldsymbol{c}^j \,=\, \sum\limits_{j \in \mathcal{I}_i}\boldsymbol{c}^j - \sum\limits_{j \in \mathcal{O}}\boldsymbol{c}^j \,=\, \boldsymbol{0}_k - \boldsymbol{e}_{\ell} \,=\, \boldsymbol{e}_{\ell} \ \ \ (\text{over } \mathbb{F}_2).
\]

In other words, the sets \( \mathcal{R}_i = \mathcal{I}_i \setminus \mathcal{O} \) for \( i = 1, \dots, J_{\mathcal{O}} \) form recovery sets for object \( o_\ell \). Including \( \mathcal{O} \) itself—which is also a recovery set for \( o_\ell \)—we obtain \( J_{\mathcal{O}} + 1 \) recovery sets in total. Moreover, by the orthogonality condition \( \mathcal{I}_i \cap \mathcal{I}_k = \mathcal{O} \) for \( i \ne k \), the sets \( \mathcal{R}_i \) are pairwise disjoint by construction.

Thus, we have constructed \( J_{\mathcal{O}} + 1 \) pairwise disjoint recovery sets for \( o_\ell \): the original set \( \mathcal{O} \) and the \( J_{\mathcal{O}} \) sets \( \mathcal{R}_1, \dots, \mathcal{R}_{J_{\mathcal{O}}} \). This construction leads to the following theorem.

\begin{theorem}\label{thm:general_bound_demand}
Consider a storage system encoded by a generator matrix \( \boldsymbol{G} \). Let \( \mathcal{O} \) be a smallest recovery set for object \( o_\ell \), with \( \ell \in [k] \), and let \( a = |\mathcal{O}| \) denote its size. Let \( J_{\mathcal{O}} \) be the maximum number of parity-check sums orthogonal on \( \mathcal{O} \). Then the maximal achevable rate \( \lambda_\ell^{\max} \) for object \( o_\ell \) satisfies
\[
1 + J_{\mathcal{O}} \,\le\, \lambda_{\ell}^{\max} \,\le\, 1 + \frac{n - a}{\max\{d^\perp - a,\ a\}} \,\le\, 1 + \frac{n - a}{d^\perp - a},
\]
where \( d^{\perp} \) is the minimum distance of the dual code \( \mathcal{C}^{\perp} \).
\end{theorem}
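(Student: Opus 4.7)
My plan is to prove the lower and upper bounds separately, both by exploiting the server-capacity constraints of the linear program defining $\lambda_\ell^{\mathrm{int}}=\lambda_\ell^{\max}$. For the lower bound $1+J_{\mathcal{O}}\le\lambda_{\ell}^{\max}$, the construction in the paragraph immediately preceding the theorem already does all the work: it produces $J_{\mathcal{O}}+1$ pairwise disjoint recovery sets for $o_\ell$, namely $\mathcal{O}$ together with $\mathcal{R}_i=\mathcal{I}_i\setminus\mathcal{O}$ for $i\in[J_{\mathcal{O}}]$. Since these sets are disjoint and every server has unit capacity, assigning rate $1$ to each of the $J_{\mathcal{O}}+1$ recovery sets is feasible and yields an aggregate served rate of $1+J_{\mathcal{O}}$.

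For the upper bound I would derive two separate inequalities on $\lambda_\ell$ and then take the tighter one. Enumerate the minimal recovery sets of $o_\ell$ as $R_0=\mathcal{O},R_1,\ldots,R_t$ and let $\lambda_{\ell,j}$ denote the rate assigned to $R_j$ in an optimal dedicated allocation. First, summing the server-capacity constraint $\sum_{j:\,l\in R_j}\lambda_{\ell,j}\le 1$ over all $l\in[n]$ yields $\sum_j|R_j|\,\lambda_{\ell,j}\le n$; since $\mathcal{O}$ is a smallest recovery set we have $|R_j|\ge a$ for every $j$, whence $a\lambda_\ell\le n$, i.e.\ $\lambda_\ell\le 1+(n-a)/a$. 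Second, for any recovery set $R\ne\mathcal{O}$ the identity $\sum_{j\in R}\boldsymbol{c}^j=\boldsymbol{e}_\ell=\sum_{j\in\mathcal{O}}\boldsymbol{c}^j$ forces $\sum_{j\in R\triangle\mathcal{O}}\boldsymbol{c}^j=\boldsymbol{0}_k$, so $R\triangle\mathcal{O}$ is the support of a nonzero dual codeword and $|R\triangle\mathcal{O}|\ge d^\perp$; combined with $|\mathcal{O}\setminus R|\le a$ this gives $|R\setminus\mathcal{O}|\ge d^\perp-a$. Summing the server constraints only over $l\notin\mathcal{O}$ then produces $\sum_{j\ge 1}|R_j\setminus\mathcal{O}|\,\lambda_{\ell,j}\le n-a$, and whenever $d^\perp>a$ dividing by $d^\perp-a$ gives $\sum_{j\ge 1}\lambda_{\ell,j}\le(n-a)/(d^\perp-a)$; adding the trivial bound $\lambda_{\ell,0}\le 1$ obtained from any single server in $\mathcal{O}$ yields $\lambda_\ell\le 1+(n-a)/(d^\perp-a)$.

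Taking the minimum of these two inequalities gives $\lambda_\ell\le 1+(n-a)/\max\{d^\perp-a,a\}$, and the last inequality of the theorem statement is automatic since $\max\{d^\perp-a,a\}\ge d^\perp-a$. I expect the main (modest) obstacle to be the regime $d^\perp\le a$, where the second inequality is vacuous; however, in that regime $\max\{d^\perp-a,a\}=a$, so the first inequality alone already matches the stated bound. Beyond this case distinction and the symmetric-difference identification of every alternative recovery set with a dual-codeword support, the argument is just elementary counting on the LP constraints.
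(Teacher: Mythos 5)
Your proof is correct and rests on the same two key ingredients as the paper's: the symmetric-difference argument, which shows that for any minimal recovery set $R\neq\mathcal{O}$ the set $R\triangle\mathcal{O}$ supports a nonzero dual codeword and hence $|R\setminus\mathcal{O}|\ge d^\perp-a$; and an aggregation of the per-server capacity constraints. The one structural difference is in how the capacity bound is assembled. The paper lower-bounds every recovery set other than $\mathcal{O}$ by $\max\{d^\perp-a,\,a\}$, sets the allocation on $\mathcal{O}$ to one unit as the worst case, and reads off the result from a single capacity inequality $a+(\lambda_\ell-1)\max\{d^\perp-a,\,a\}\le n$. You instead derive two separate inequalities --- $a\lambda_\ell\le n$ by summing the server constraints over all of $[n]$, and $(d^\perp-a)\sum_{j\ge1}\lambda_{\ell,j}\le n-a$ by summing only over $[n]\setminus\mathcal{O}$, the latter combined with the explicit LP fact $\lambda_{\ell,0}\le1$ --- and recover the $\max$ in the denominator by taking the tighter of the two. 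The two derivations are algebraically equivalent, but your split makes the $\lambda_{\ell,0}\le1$ step a visible constraint rather than an ansatz and disposes of the $d^\perp\le a$ regime automatically, whereas the paper's combined form is more compact at the cost of an implicit ``worst-case allocation'' argument. Either version is a valid proof; yours is, if anything, the more LP-transparent rendering of the same idea.
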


\begin{proof}
The lower follows from the above argument. We now prove the upper bound.
\\[2ex]
\textbf{Step 1: Every recovery set other than \( \mathcal{O} \) has size at least \( d^\perp - a \).}

Let \( \mathcal{O} \) be a minimal recovery set for \( o_\ell \), with size \( a = |\mathcal{O}| \), and let \( \mathcal{O}' \ne \mathcal{O} \) be any other recovery set for \( o_\ell \). Since both \( \mathcal{O} \) and \( \mathcal{O}' \) recover \( o_{\ell} \), we have
\[
\sum_{j \in \mathcal{O}} \boldsymbol{c}^j \,=\, \boldsymbol{e}_{\ell} \,=\, \sum_{j \in \mathcal{O}'} \boldsymbol{c}^j.
\]
Adding these equations over \( \mathbb{F}_2 \) yields
\[
\sum_{j \in \mathcal{O}} \boldsymbol{c}^j + \sum_{j \in \mathcal{O}'} \boldsymbol{c}^j \,=\, \boldsymbol{0}_k.
\]
Let \( \mathcal{O}_1 := (\mathcal{O} \cup \mathcal{O}') \setminus (\mathcal{O} \cap \mathcal{O}') \) be the symmetric difference of \( \mathcal{O} \) and \( \mathcal{O}' \). Then
\[
\sum_{j \in \mathcal{O}_1} \boldsymbol{c}^j \,=\, \boldsymbol{0}_k,
\]
which implies that the incidence vector \( \boldsymbol{\chi}(\mathcal{O}_1) \in \mathbb{F}_2^n \) satisfies
\[
\boldsymbol{G} \cdot \boldsymbol{\chi}(\mathcal{O}_1)^\top \,=\, \boldsymbol{0}_k,
\]
so \( \boldsymbol{\chi}(\mathcal{O}_1) \in \mathcal{C}^\perp \). Since \( \mathcal{O} \ne \mathcal{O}' \), the set \( \mathcal{O}_1 \) is nonempty, and thus \( \boldsymbol{\chi}(\mathcal{O}_1) \) is a nonzero codeword in \( \mathcal{C}^\perp \) with weight at least \( d^\perp \). One has
\[
|\mathcal{O}_1| \,=\, \textsf{w}(\boldsymbol{\chi}(\mathcal{O}_1)) \,\ge\, d^\perp.
\]
By the identity
\[
|\mathcal{O}_1| \,=\, |\mathcal{O}| + |\mathcal{O}'| - 2|\mathcal{O} \cap \mathcal{O}'|,
\]
we obtain
\[
|\mathcal{O}'| \,\ge\, |\mathcal{O}_1| - |\mathcal{O}| \,\ge\, d^\perp - a.
\]
Therefore, every recovery set distinct from \( \mathcal{O} \) must have size at least \( d^\perp - a \). Since \( \mathcal{O} \) has size \( a \) and is by definition the smallest recovery set, it follows that the size of any other recovery set is at least \( \max\{d^\perp - a,\, a\} \).
\\[2ex]
\textbf{Step 2: Upper bound via capacity constraint.}
To support a demand rate \( \lambda_{\ell} = \lambda_{\ell}^{\max} \), the system must allocate server capacity across recovery sets for object \( o_{\ell} \). As shown in Step 1, the object can be recovered from
\begin{itemize}
    \item one recovery set \( \mathcal{O} \) of size \( a \), and
    \item other recovery sets, each of size at least \( \max\{d^{\perp} - a,\, a\} \ge a \).
\end{itemize}

To minimize total server usage, we consider the following assignment: allocate one unit of request to the smallest recovery set \( \mathcal{O} \), which consumes \( a \) servers; assign the remaining \( \lambda_{\ell} - 1 \) units of request to other recovery sets, each of size at least \( \max\{d^\perp - a,\, a\} \). Therefore, the total number of servers involved is at least
\[
1\cdot a + (\lambda_{\ell} - 1)\cdot\max\{d^\perp - a,\ a\}.
\]

Since this request is achievable, the total required server capacity must not exceed the total available capacity \( n \). Thus,
\[
a + (\lambda_{\ell} - 1)\cdot\max\{d^\perp - a,\ a\} \,\le\, n.
\]

Solving for \( \lambda_{\ell} \) yields
\[
\lambda_{\ell} \,\le\, 1 + \frac{n - a}{\max\{d^\perp - a,\, a\}} \,\le\, 1 + \frac{n - a}{d^\perp - a}.
\]

Since this holds for all achievable \( \lambda_{\ell} \), it applies in particular to \( \lambda_{\ell}^{\max} \). Together with the lower bound \( \lambda_{\ell}^{\max} \ge 1 + J_{\mathcal{O}} \), this completes the proof.
\end{proof}

\begin{remark}
Theorem~\ref{thm:general_bound_demand} immediately yields an upper bound on \( J_{\mathcal{O}} \), i.e., the number of parity-check sums orthogonal on a given set of codeword coordinates \( \mathcal{O} \) of size \( a \)
\[
   J_{\mathcal{O}} \;\le\; \frac{n - a}{\max\{a, \, d^{\perp} - a\}}.
\]
\end{remark}

\begin{remark}\label{remark:verify}
We show that the upper bound in Theorem~\ref{thm:systematic} is achieved for nonsystematic codes whose SRR is known up to now, which are Reed-Muller codes and nonsystematic MDS codes.
\begin{itemize}
    \item \textbf{Reed--Muller codes.} Consider the $\mathrm{RM}(r, m)$ code of length $n = 2^m$ and order $r$. Its dual is the $\mathrm{RM}(m - r - 1, m)$ code, which has minimum distance $d^\perp = 2^{r+1}$~\cite{Coding:books/MacWilliamsS77}. Substituting into the bound, we obtain
    \[
    \lambda_{\ell}^{\max} \,\le\, 1 + \frac{2^m - a}{2^{r+1} - a},
    \]
    where $a$ is the size of the smallest recovery set for object $o_{\ell}$. It was shown in~\cite{SRR:preprint/arxiv/LySL25} that this bound is achieved with equality, establishing the tightness of the bound for Reed--Muller codes.
    \item \textbf{Non-systematic MDS codes.} Consider a non-systematic \( [n,\, k,\, n - k + 1] \) MDS code ($n\ge k+1$), whose dual is an \( [n,\, n - k,\, k + 1] \) MDS code. In such codes, each recovery set has size \( k \). Applying the general bound yields
\[
\lambda_j^{\max} \,\le\, 1 + \frac{n - k}{k + 1 - k} \,=\, 1 + (n - k), \quad \forall\, j \in [k],
\]
which significantly overestimates the true value.

A tighter bound is obtained by refining the denominator to account for the actual recovery set size
\[
\lambda_j^{\max} \,\le\, 1 + \frac{n - k}{\max\{k + 1 - k,\, k\}} \,=\, 1 + \frac{n - k}{k} \,=\, \frac{n}{k}.
\]
This bound is known to be tight and achievable for non-systematic MDS codes, as established in~\cite{SRR:lySV2025, SRR:journals/tit/AktasJKKS21}.

We also observe that our result offers an alternative estimate, where the maximal achievable requests for individual objects are bounded in terms of the dual distance \( d^{\perp} \) rather than the minimum distance \( d_{\min} \) of the original code. In contrast, the bound from~\cite[Corollary 2]{SRR:conf/isit/KazemiKS20} states that for any linear code,
\[
\left\lceil \min_j \{ \lambda_j^{\max} \} \right\rceil \,\le\, d_{\min}.
\]
For instance, in the case of \( [n,\, k,\, n - k + 1] \) MDS codes, this yields the bound \( \lambda_j^{\max} \le n - k + 1 \), which is strictly weaker than the bound \( \lambda_j^{\max} \le n/k \) whenever \( n \ge k \).
\end{itemize}
\end{remark}

\section{Application to Systematic codes and Connection to Combinatorial Designs}\label{sec:Systematic}
In this section, we apply the results of the previous section to systems where the generator matrix \( \boldsymbol{G} \) is systematic, so that each object has a recovery set of size~1. We then propose a sufficient condition under which the upper bound on the maximal achievable rate is attained, and demonstrate that this condition holds for several classes of systematic codes, recovering known results and establishing new ones.

\subsection{Bounds for Maximal achievable rates in Systematic Codes}

We apply the results from Section~\ref{sec:SRR_MLD} to cases when $\boldsymbol{G}$ is systematic, which is of the form
\begin{align}\label{eq:systematic_generator}
    \boldsymbol{G}_{k\times n} \,=\, \left[\boldsymbol{I}_{k} \ |\ \boldsymbol{P}_{k\times (n-k)}\right],
\end{align}
where the first $k$ columns form the identity matrix $\boldsymbol{I}_k$, and the remaining $n-k$ columns provide redundancy. In this setting, each column \( \boldsymbol{c}^{\ell} = \boldsymbol{e}_{\ell} \) for \( \ell \in [k] \), so the smallest recovery set for object \( o_{\ell} \) is simply \( \mathcal{O} = \{\ell\} \), consisting of the systematic column \( \boldsymbol{c}^{\ell} \) itself, and has size 1. In this case, \( J_{\mathcal{O}} \) denotes the maximum number of parity-check sums that are orthogonal on coordinate \( \ell \)--that is, each parity check includes \( x_\ell \) and intersects the others only at \( x_\ell \). In this case, we will call this quantity instead by \( J_{\ell} := J_{\mathcal{O}} \). We assume that the dual distance satisfies \( d^{\perp} \ge 2 \), so that \( \max\{d^{\perp} - a,\ a\} = \max\{d^{\perp} - 1,\ 1\} = d^{\perp} - 1 \).

Consider the systematic \( [15, 11, 3] \) Hamming code \( \mathcal{C} \), generated by a systematic matrix \( \boldsymbol{G}_H \) of the form~\eqref{eq:systematic_generator}, and a storage system encoded using \( \boldsymbol{G}_H \). The dual code \( \mathcal{C}^\perp \) is the \( [15, 4, 8] \) Simplex code, generated by \( \boldsymbol{G}_S \) as introduced in the previous section. In this setting, the upper bound from Theorem~\ref{thm:general_bound_demand} becomes

\[
\lambda_1^{\max} \,\leq\, 1 + \frac{15 - 1}{8 - 1} = 1 + \frac{14}{7} = 3.
\]

We now show that this bound is indeed achievable. Consider the set of all minimum-weight dual codewords \( \boldsymbol{h} \in \mathcal{C}^\perp \) (in this case, those of weight 8) with \( h_1 = 1 \). Their supports are listed below:
\begin{align}
\mathrm{Supp}(\boldsymbol{h}^1) = \{1, 3, 5, 7, 9, 11, 13, 15\}\notag\\
\mathrm{Supp}(\boldsymbol{h}^2) = \{1, 2, 5, 6, 9, 10, 13, 14\}\notag\\
\mathrm{Supp}(\boldsymbol{h}^3) = \{1, 3, 4, 6, 9, 11, 12, 14\}\notag\\
\mathrm{Supp}(\boldsymbol{h}^4) = \{1, 2, 4, 7, 9, 10, 12, 15\}\notag\\
\mathrm{Supp}(\boldsymbol{h}^5) = \{1, 3, 5, 7, 8, 10, 12, 14\}\notag\\
\mathrm{Supp}(\boldsymbol{h}^6) = \{1, 2, 5, 6, 8, 11, 12, 15\}\notag\\
\mathrm{Supp}(\boldsymbol{h}^7) = \{1, 3, 4, 6, 8, 10, 13, 15\}\notag\\
\mathrm{Supp}(\boldsymbol{h}^8) = \{1, 2, 4, 7, 8, 11, 13, 14\}\notag
\end{align}
Each $\boldsymbol{h}^i$ satisfies $\boldsymbol{G}_H \cdot (\boldsymbol{h}^i)^\top = \boldsymbol{0}_{11}$, and since $\boldsymbol{G}_H$ is systematic, the first column of $\boldsymbol{G}_H$ is $\boldsymbol{c}^1(\boldsymbol{G}_H) = \boldsymbol{e}_1$, we obtain
\begin{align}\label{eq:sumColumns}
\sum_{j\, \in\, \mathrm{Supp}(\boldsymbol{h}^i) \setminus \{1\}} &\boldsymbol{c}^j(\boldsymbol{G}_H)\nonumber \\ & = \sum_{j\, \in\, \mathrm{Supp}(\boldsymbol{h}^i)} \boldsymbol{c}^j(\boldsymbol{G}_H) - \boldsymbol{c}^1(\boldsymbol{G}_H)\notag\\
\,& =\, \boldsymbol{0} - \boldsymbol{e}_1 \,=\, \boldsymbol{e}_1, \quad \forall\, i = 1, \dots, 8,
\end{align}
which implies that each set $\mathrm{Supp}(\boldsymbol{h}^i) \setminus \{1\}$ is a recovery set for object $o_1$. Notably, each coordinate $j \in \{2, 3, \dots, 15\}$ appears in precisely 4 of these eight sets, forming a $1$-$(14, 7, 4)$ design.

Therefore, we can assign demand for object $o_1$ to its recovery sets as follows:
\[
\scalebox{0.95}{$
\begin{cases}
    \text{Assign } \lambda_1 = 1 \text{ to the singleton set } \boldsymbol{c}^1 = \boldsymbol{e}_1, \\
    \text{Assign } \lambda_{-1} = \frac{1}{4} \text{ to each of 8 sets } \mathrm{Supp}(\boldsymbol{h}^i) \setminus \{1\},\ i \in [8].
\end{cases}
$}
\]
This assignment ensures that the total request served by each server is exactly $\frac{1}{4}\cdot 4 = 1$, and the total demand for $o_1$ is
\[
1 + \frac{8}{4} = 3,
\]
which achieves the theoretical upper bound.

As seen above, when the supports of minimum-weight codewords form a specific combinatorial structure, the upper bound on \( \lambda_1^{\max} \) becomes achievable. This observation is formalized in the following theorem.

\begin{theorem}\label{thm:systematic}
Consider a system encoded by a systematic generator matrix \( \boldsymbol{G} \) (over \( \mathbb{F}_2 \)). Then, for any object \( o_{\ell} \) corresponding to the systematic column \( \boldsymbol{c}^{\ell} = \boldsymbol{e}_{\ell} \) (with \( \ell \in [k] \)), the maximal achievable rate \( \lambda_{\ell}^{\max} \) satisfies
\[
1 + J_{\ell} \,\le\, \lambda_{\ell}^{\max} \,\le\, 1 + \frac{n - 1}{d^\perp - 1},
\]
where \( d^\perp \) is the minimum distance of \( \mathcal{C}^\perp \), which is the dual code of the code generated by $\boldsymbol{G}$, assuming \( d^\perp > 1 \).

The upper bound is achieved if the collection
\[
\mathcal{N}_{\ell} = \left\{ \mathrm{Supp}(\boldsymbol{h}) \setminus \{\ell\} \,\middle|\, \boldsymbol{h} \in \mathcal{C}^\perp,\; \textsf{w}(\boldsymbol{h}) = d^\perp,\; h_{\ell} = 1 \right\}
\]
forms a \( 1 \)-design on the point set \( [n] \setminus \{\ell\} \); that is, the supports (excluding position \( \ell \)) of all minimum-weight dual codewords with a 1 in position \( \ell \) form a \( 1 \)-design.
\end{theorem}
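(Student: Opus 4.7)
The two-sided bound is a direct specialization of Theorem~\ref{thm:general_bound_demand}. Since $\boldsymbol{G}$ is systematic, the singleton $\mathcal{O} = \{\ell\}$ is itself a recovery set for $o_\ell$ with $a = |\mathcal{O}| = 1$, and since $d^\perp \ge 2$ we have $\max\{d^\perp - 1,\, 1\} = d^\perp - 1$. Plugging $a=1$ into the theorem gives
\[
1 + J_\ell \,\le\, \lambda_\ell^{\max} \,\le\, 1 + \frac{n-1}{d^\perp - 1}.
\]
So the only substantive task is to prove achievability of the upper bound under the $1$-design hypothesis.

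\textbf{Constructing recovery sets.} For every $\boldsymbol{h}\in\mathcal{C}^\perp$ with $\textsf{w}(\boldsymbol{h}) = d^\perp$ and $h_\ell = 1$, the relation $\boldsymbol{G}\cdot\boldsymbol{h}^\top = \boldsymbol{0}_k$ gives $\sum_{j\in\mathrm{Supp}(\boldsymbol{h})}\boldsymbol{c}^j = \boldsymbol{0}_k$. Subtracting $\boldsymbol{c}^\ell = \boldsymbol{e}_\ell$, exactly as in the computation~\eqref{eq:sumColumns}, yields
\[
\sum_{j\,\in\,\mathrm{Supp}(\boldsymbol{h})\setminus\{\ell\}}\boldsymbol{c}^j \,=\, \boldsymbol{e}_\ell,
\]
so each block $R_{\boldsymbol{h}} := \mathrm{Supp}(\boldsymbol{h})\setminus\{\ell\}$ of size $d^\perp - 1$ is a recovery set for $o_\ell$, and none of these blocks contains the server $\ell$. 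Together with the singleton $\{\ell\}$ this gives a family of recovery sets on which I will distribute the demand.

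\textbf{Exploiting the $1$-design.} Let $b := |\mathcal{N}_\ell|$. By hypothesis, $\mathcal{N}_\ell$ is a $1$-design on $[n]\setminus\{\ell\}$ with blocks of size $d^\perp - 1$, so every point $j\ne\ell$ lies in the same number $r$ of blocks; double counting incidences gives $b(d^\perp - 1) = (n-1)\,r$. Now allocate demand for $o_\ell$ as follows: send rate $1$ to the singleton recovery set $\{\ell\}$, and rate $1/r$ to each block in $\mathcal{N}_\ell$. The load on server $\ell$ is $1$ (it appears only in the singleton); the load on every other server $j$ is $r\cdot(1/r) = 1$. Hence the capacity constraint $\boldsymbol{\mu} = \boldsymbol{1}_n$ is met with equality everywhere, and the total demand served for $o_\ell$ is
\[
1 + \frac{b}{r} \,=\, 1 + \frac{n-1}{d^\perp - 1},
\]
matching the upper bound.

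\textbf{Where the work sits.} There is no genuine obstacle beyond recognizing that the $1$-design condition is exactly what makes the natural uniform allocation (equal weight $1/r$ on each dual-codeword-derived recovery set) feasible: without uniform replication, some server would be overloaded before the total rate reached $1 + (n-1)/(d^\perp-1)$. The remaining verifications (that $\{\ell\}$ is disjoint from every $R_{\boldsymbol{h}}\in\mathcal{N}_\ell$, and that $b/r = (n-1)/(d^\perp - 1)$) are immediate from the construction and double counting, respectively.
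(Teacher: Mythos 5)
Your proof is correct and follows essentially the same route as the paper: specialize Theorem~\ref{thm:general_bound_demand} with $a=1$, build recovery sets from minimum-weight dual codewords through $\ell$, use the $1$-design double-counting identity $b(d^\perp-1)=(n-1)r$, and assign rate $1/r$ per block plus rate $1$ to the singleton. Notation differs ($b,r$ vs.\ the paper's $\gamma,d_c$) but the construction, the server-load bookkeeping, and the final arithmetic are identical.
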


\begin{proof}
The lower and upper bounds follow from Theorem~\ref{thm:general_bound_demand}, noting that when $\boldsymbol{G}$ is systematic, the smallest recovery set for each object consists of a single systematic coordinate, i.e., \( a = 1 \). The upper bound can also be proved using Theorem IV.6 in~\cite{DualCode:GianiraAS22} by setting all other requests $\lambda_{j}, j \neq \ell$ to 0.

To prove achievability of the upper bound, assume the sets in \( \mathcal{N}_{\ell} \) form a \( 1 \)-\((n-1,\, d^\perp - 1,\, d_c)\) design. Let \( \gamma = |\mathcal{N}_{\ell}| \) be the number of blocks. Since each block contains \( d^\perp - 1 \) points, double-counting the total number of point-block incidences in this 1-design yields the identity
\begin{equation}\label{eq:design_equality}
(n - 1) \cdot d_c = \gamma \cdot (d^\perp - 1).
\end{equation}

For every \( \boldsymbol{h} \in \mathcal{C}^\perp \) such that $\boldsymbol{h}\setminus\{l\} \in  \mathcal{N}_{\ell}$, we have 
\( \boldsymbol{G} \cdot \boldsymbol{h}^\top = \boldsymbol{0}_k \). Since \( \boldsymbol{c}^{\ell} = \boldsymbol{e}_{\ell} \) and \( h_{\ell} = 1 \), it follows that:
\[
\sum_{j\, \in\, \mathrm{Supp}(\boldsymbol{h}) \setminus \{\ell\}} \boldsymbol{c}^j(\boldsymbol{G}) = \boldsymbol{0}_k - \boldsymbol{e}_{\ell} = \boldsymbol{e}_{\ell}.
\]
(This generalized Eq.~\eqref{eq:sumColumns}.) Thus, each set \( \mathrm{Supp}(\boldsymbol{h}) \setminus \{\ell\} \) is a recovery set of size $d^{\perp}-1$ for object \( o_{\ell} \). Because these recovery sets by assumption form a \( 1 \)-design, each server \( j \in [n] \setminus \{\ell\} \) appears in exactly \( d_c \) of the \( \gamma \) recovery sets. We now construct a fractional allocation
\[
\begin{cases}
\text{Assign } \lambda_{\ell} = 1 \text{ to the singleton set } \{\ell\}, \\
\text{Assign } \lambda_{\ell} = \dfrac{1}{d_c} \text{ to each of $\gamma$ recovery sets in } \mathcal{N}_{\ell}.
\end{cases}
\]
This allocation ensures that each server handles at most unit total demand ($d_c \cdot \frac{1}{d_c} = 1$). The total demand served for object \( o_{\ell} \) is
\[
\lambda_{\ell}^{\max} = 1 + \frac{\gamma}{d_c} = 1 + \frac{n - 1}{d^\perp - 1},
\]
where the final equality uses~\eqref{eq:design_equality}, thus matching the claimed upper bound.
\end{proof}

\medskip

We now observe that the achievability condition in Theorem~\ref{thm:systematic} is satisfied if the supports of the minimum-weight codewords of the dual code $\mathcal{C^{\perp}}$ form a \( 2 \)-design, yielding a more elegant and broadly applicable condition for achievability. This observation follows from the result below.

\begin{lemma}[Design Reduction, see, e.g.,{~\cite[Theorem 9.2]{CombiDesigns:books/daglib/Stinson04}}]
Let \( \mathcal{D} = (V, \mathcal{A}) \) be a \( t \)-design and let \( Z \subseteq V \) with \( |Z| = i < t \). Then the incidence structure \( (V \setminus Z,\, \{A \setminus Z : A \in \mathcal{A},\, Z \subset A\}) \) is a \( (t - i) \)-design.
\end{lemma}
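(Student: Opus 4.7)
The plan is to verify directly that the derived incidence structure satisfies the $(t-i)$-design property by counting, for an arbitrary $(t-i)$-subset $T \subseteq V \setminus Z$, how many of the new blocks contain $T$. The only ingredient needed will be the $t$-design property of $\mathcal{D}$ itself, applied once to the $t$-subset $T \cup Z$.

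First, I would set up notation: write $\mathcal{A}' = \{A \setminus Z : A \in \mathcal{A},\; Z \subseteq A\}$ for the reduced block collection, fix an arbitrary $(t-i)$-subset $T$ of $V \setminus Z$, and try to count the number of $B \in \mathcal{A}'$ containing $T$. Because $T$ is disjoint from $Z$ by construction, the condition $T \subseteq A \setminus Z$ is equivalent to $T \subseteq A$, which combined with the defining requirement $Z \subseteq A$ becomes simply $T \cup Z \subseteq A$. Moreover, the restriction map $A \mapsto A \setminus Z$, taken over original blocks containing $Z$, is injective (its inverse simply adjoins $Z$), so new blocks containing $T$ are in bijection with original blocks containing $T \cup Z$.

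Next, I would invoke the $t$-design hypothesis. Since $T$ and $Z$ are disjoint, $|T \cup Z| = (t-i) + i = t$, so $T \cup Z$ is a $t$-subset of $V$ and lies in exactly $\lambda$ blocks of $\mathcal{A}$, where $\lambda$ is the index of $\mathcal{D}$. Consequently, every $(t-i)$-subset of $V \setminus Z$ lies in exactly $\lambda$ blocks of $\mathcal{A}'$, which is precisely the defining property of a $(t-i)$-design. As a byproduct, if $\mathcal{D}$ has parameters $t$-$(v,k,\lambda)$, the derived structure inherits parameters $(t-i)$-$(v-i,\,k-i,\,\lambda)$, with the block size dropping by $i$ because every retained block contains $Z$ in full.

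The main (and essentially only) subtle point is the injectivity of the restriction $A \mapsto A \setminus Z$, which must be checked to avoid undercounting multiplicities when $\mathcal{A}$ is regarded as a multiset of blocks; this is immediate once one restricts attention to blocks containing $Z$. Apart from this bookkeeping, there is no genuine obstacle: the argument is a routine double-counting exercise that uses nothing beyond the definition of a $t$-design, and I would expect the entire proof to fit in a short paragraph.
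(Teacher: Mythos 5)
The paper does not give its own proof of this lemma; it is stated as a cited result from Stinson (Theorem~9.2) and used as a black box to derive Corollary~\ref{coro:2desgin}. Your argument is correct and is essentially the standard textbook proof of derived designs: you reduce the count of reduced blocks through a $(t-i)$-set $T\subseteq V\setminus Z$ to the count of original blocks through the $t$-set $T\cup Z$ via the injective map $A\mapsto A\setminus Z$ on blocks containing $Z$ (injectivity is immediate because the inverse adjoins $Z$, and this also handles the multiset case), and then apply the $t$-design hypothesis once. The parameter bookkeeping $(t-i)$-$(v-i,\,k-i,\,\lambda)$ is also right, noting the block size drops by exactly $i$ because every retained block contains all of $Z$, and that $Z\subset A$ versus $Z\subseteq A$ makes no difference here since $|Z|=i<t\le k=|A|$ forces the containment to be proper whenever it holds. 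Nothing is missing.
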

That is, by fixing any subset \( Z \) of size \( i < t \), the substructures formed by removing \( Z \) from every block containing it yield a \( (t-i) \)-design on the remaining points.

Applying this Design Reduction result with \( t = 2 \) and \( Z = \{\ell\} \), we conclude that the subcollection of supports of minimum-weight dual codewords containing coordinate \( \ell \), punctured at position \( \ell \), forms a \( 1 \)-design on \( [n] \setminus \{\ell\} \), which confirms the achievability condition in Theorem~\ref{thm:systematic}. We thus have the following corollary.
\begin{corollary}\label{coro:2desgin}
Let \( \boldsymbol{G} \) be a systematic generator matrix of a binary linear code $\mathcal{C}$ with dual distance \( d^\perp > 1 \). If the supports of the minimum-weight codewords in the dual code \( \mathcal{C}^\perp \) form a \( 2 \)-design, then
\[
\lambda_{\ell}^{\max} \;=\; 1 + \frac{n - 1}{d^\perp - 1}, \quad \forall\, \ell \in [k].
\]
\end{corollary}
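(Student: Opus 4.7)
The plan is to obtain Corollary~\ref{coro:2desgin} as an immediate consequence of Theorem~\ref{thm:systematic} combined with the Design Reduction lemma stated just above the corollary. The overall strategy is to use the 2-design hypothesis on the supports of minimum-weight dual codewords to verify, via Design Reduction, the 1-design hypothesis that Theorem~\ref{thm:systematic} requires for saturation of its upper bound; the corollary then follows coordinate by coordinate.

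First I would fix an arbitrary $\ell \in [k]$ and form the block collection
\[
   \mathcal{B} \;=\; \bigl\{\,\Supp(\boldsymbol{h}) : \boldsymbol{h} \in \mathcal{C}^\perp,\ \textsf{w}(\boldsymbol{h}) = d^\perp\,\bigr\},
\]
which by assumption is a 2-design on the point set $[n]$. I would then apply the Design Reduction lemma with $t = 2$, $i = 1$, and $Z = \{\ell\}$. The lemma yields that
\[
   \bigl\{\,B \setminus \{\ell\} : B \in \mathcal{B},\ \ell \in B\,\bigr\}
\]
is a 1-design on $[n] \setminus \{\ell\}$. Because $\ell \in \Supp(\boldsymbol{h})$ is equivalent to $h_\ell = 1$, this collection is exactly the set $\mathcal{N}_\ell$ appearing in Theorem~\ref{thm:systematic}.

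Next I would invoke Theorem~\ref{thm:systematic}: once $\mathcal{N}_\ell$ is known to be a 1-design, the theorem guarantees that the upper bound $1 + \tfrac{n-1}{d^\perp - 1}$ on $\lambda_\ell^{\max}$ is achieved by the explicit fractional allocation constructed in its proof. Combined with the same theorem's upper bound, this yields
\[
   \lambda_\ell^{\max} \;=\; 1 + \frac{n-1}{d^\perp - 1}.
\]
Since $\ell$ was arbitrary, the equality holds for every object index, and the corollary follows.

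The argument has essentially no obstacle beyond bookkeeping: both heavy-lifting results (the allocation in Theorem~\ref{thm:systematic} and the combinatorial reduction) are already in hand. The only care required is the notational alignment in the middle step, namely identifying the blocks of $\mathcal{B}$ that contain $\ell$ with dual codewords satisfying $h_\ell = 1$, and identifying their puncturings at $\ell$ with the blocks of $\mathcal{N}_\ell$. A minor auxiliary remark to include is that the hypothesis $d^\perp > 1$ guarantees $\mathcal{B}$ is non-empty and that $d^\perp - 1 \ge 1$, so the design parameters and the denominator in the final expression are well-defined.
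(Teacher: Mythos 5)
Your proposal is correct and follows exactly the paper's own argument: apply the Design Reduction lemma with $t=2$ and $Z=\{\ell\}$ to the 2-design of minimum-weight dual supports, identify the resulting 1-design with $\mathcal{N}_\ell$, and invoke the achievability half of Theorem~\ref{thm:systematic}. No meaningful difference from the paper's route.
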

Conditions under which dual codewords of certain weights form block designs have been extensively studied, although much remains to be understood. Interested readers are referred to the Assmus--Mattson theorem~\cite{5_designs:journals/joct/AssusH69}, its related results~\cite{combinatorial_design:journals/tit/CalderbankDS91}, and references therein.

\begin{remark}
  For a code with a systematic generator matrix,
  Theorem~\ref{thm:systematic} relates each symbol’s
  \emph{maximal achevable rate} to its \emph{error-correction capability}.
  Combined with Proposition~\ref{prop:MJDecoding}, it follows that, for each coordinate $\ell\in[k]$, the largest number of errors a code can tolerate while still correctly recovering the message symbol $a_\ell$ using one-step MLD, denoted $t_{\ell}$, satisfies
  \[
     t_\ell \;=\;
     \left\lfloor \frac{J_\ell}{2} \right\rfloor
     \;\le\;
     \left\lfloor \frac{\lambda_{\ell}^{\max}-1}{2} \right\rfloor ,
  \]
  where $J_\ell$ denotes the number of parity-check sums orthogonal to
  coordinate~$\ell$. Moreover, Theorem~\ref{thm:systematic} directly implies the bound on $J_{\ell}$
  \[
     J_\ell \;\le\; \frac{n-1}{d^{\perp}-1},
  \]
  which is a classical bound that appears, for example, as Theorem~17 in
  \cite[Ch.~13]{Coding:books/MacWilliamsS77}.
\end{remark}

\subsection{Applications to Systematic Codes}
We now demonstrate that the upper bound in Theorem~\ref{thm:systematic} is matched for all systematic codes whose SRR is explicitly known--namely, Simplex codes, Maximum Distance Separable (MDS) codes, and repetition codes.
    \subsubsection{Simplex Codes} Consider a systematic \( [2^k - 1,\, k,\, 2^{k-1}] \) Simplex code. Its dual is a \( [2^k - 1,\, 2^k - k - 1,\, 3] \) Hamming code. The upper bound in Theorem~\ref{thm:systematic} becomes
    \[
    \lambda_{\ell}^{\max} \,\le\, 1 + \frac{2^k - 2}{3-1} = 2^{k-1}.
    \]
This bound is achieved by Corollary~\ref{coro:2desgin}, as the minimum-weight codewords in the dual Hamming code (i.e., weight-3 codewords) form a 2-design—specifically, a \emph{Steiner system}~\cite[Theorem 10.25]{CombiDesigns:books/daglib/Stinson04}, which confirms the tightness of the bound for Simplex codes. An alternative proof using graph-theoretic arguments appears in~\cite{SRR:conf/isit/KazemiKSS20, SRR:journals/tit/AktasJKKS21}.
    
\subsubsection{Systematic MDS Codes}  
  Consider a systematic $[n,k,n-k+1]$ MDS code with $n\ge k+1$.  
  Its dual is an $[n,n-k,k+1]$ MDS code, so the general upper bound reads
  \[
     \lambda_{\ell}^{\max}\;\le\;1+\frac{n-1}{k}.
  \]
The bound is achieved by Corollary~\ref{coro:2desgin}. Indeed, it is known that the minimum–weight codewords of an $[n,k,d_{\min}]$ MDS code form a $d_{\min}$-design~\cite{5_designs:journals/joct/AssusH69}. Since here $d_{\min}=n-k+1\ge2$, Lemma~\ref{thm:lower_balance} guarantees that the supports of the minimum-weight dual codewords always constitute a $2$-design. Hence the bound is tight for all systematic MDS codes. A purely graph–theoretic derivation appears in
  \cite{SRR:journals/tit/AktasJKKS21,SRR:lySV2025}.

    \subsubsection{Repetition and Single parity-check Codes} For the (systematic) repetition code with parameters \( [n,\, 1,\, n] \), the maximal achevable rate for the sole object \( o_1 \) is 
    \[
    \lambda_1^{\max} \,=\, n.
    \]
    Its dual is a systematic single parity-check (SPC) code with parameters \( [n,\, n - 1,\, 2] \). For any object \( o_{\ell} \), \( \ell \in [n-1] \), the maximal achievable rate is
    \[
    \lambda_{\ell}^{\max} \,=\, 2.
    \]
    These results follow directly from the fact that both codes are systematic MDS codes, and thus the general tightness and achievability results for MDS codes apply.

\subsection{Application to Hamming Codes}
\begin{theorem}\label{thm:Hamming_demand}
For any systematic Hamming code with parameters \( [2^k - 1,\, 2^k - k - 1,\, 3] \), the maximal achevable rate for any object \( o_{\ell} \), \( \ell \in [k] \), is
\[
\lambda_{\ell}^{\max} \,=\, 3.
\]
Therefore, increasing the number of server nodes does not improve the maximal achievable request rates for data objects in such systems.
\end{theorem}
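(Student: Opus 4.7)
The plan is to apply Corollary~\ref{coro:2desgin} directly, so the entire argument splits into: (i) identifying the dual code and plugging into the upper bound of Theorem~\ref{thm:systematic}, and (ii) verifying the $2$-design hypothesis for the minimum-weight codewords of that dual.

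First, I would recall that the dual of the $[2^k-1,\,2^k-k-1,\,3]$ Hamming code is the $[2^k-1,\,k,\,2^{k-1}]$ Simplex code, so $n=2^k-1$ and $d^\perp = 2^{k-1}$. Substituting into Theorem~\ref{thm:systematic} yields the upper bound
\[
\lambda_\ell^{\max} \;\le\; 1 + \frac{n-1}{d^\perp - 1} \;=\; 1 + \frac{2^k-2}{2^{k-1}-1} \;=\; 3,
\]
valid for every $\ell \in [2^k-k-1]$.

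Next, for achievability, I would invoke Corollary~\ref{coro:2desgin}: it suffices to show that the supports of the minimum-weight codewords of the Simplex code form a $2$-design on $[2^k-1]$. This is a classical fact and can be verified in a few lines: every nonzero codeword of the Simplex code has weight exactly $2^{k-1}$, so the set of minimum-weight supports equals the set of supports of all $2^k-1$ nonzero codewords. Each such support is the complement of a hyperplane of $\mathrm{PG}(k-1,\mathbb{F}_2)$, and the group $\mathrm{GL}(k,\mathbb{F}_2)$ acts transitively on ordered pairs of columns of the standard Simplex generator matrix, which forces the collection of supports to be a $2$-design; a short counting argument (Fisher-style) then determines the parameters to be $2\text{--}(2^k-1,\,2^{k-1},\,2^{k-2})$. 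Once the $2$-design property is in hand, Corollary~\ref{coro:2desgin} delivers $\lambda_\ell^{\max} = 1 + (n-1)/(d^\perp-1) = 3$, matching the upper bound.

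Finally, the structural consequence in the theorem statement, that adding more server nodes does not increase the per-object rate, is immediate: the value $\lambda_\ell^{\max}=3$ is independent of $k$, so even as $n=2^k-1$ grows, each individual object's maximal achievable demand remains pinned at $3$. The main obstacle is really just the clean verification of the $2$-design property of the Simplex code's nonzero supports; this is standard, but if one wishes to keep the proof self-contained rather than citing~\cite{5_designs:journals/joct/AssusH69,CombiDesigns:books/daglib/Stinson04}, it can be derived by a direct double-counting: fix two coordinates $i<j$, and count codewords $\boldsymbol{h}$ with $h_i=h_j=1$ using the fact that columns of the Simplex generator matrix enumerate all nonzero vectors of $\mathbb{F}_2^k$; the count $\lambda = 2^{k-2}$ is the same for every pair $(i,j)$, which is exactly the $2$-design condition needed.
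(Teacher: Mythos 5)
Your proposal is correct and follows essentially the same route as the paper: identify the dual as the $[2^k-1,\,k,\,2^{k-1}]$ Simplex code, compute the upper bound $1+(n-1)/(d^\perp-1)=3$, observe that all nonzero Simplex codewords have weight $2^{k-1}$, and apply Corollary~\ref{coro:2desgin} via the $2$-design property of their supports (the paper cites a reference for this fact, while you supply a short self-contained double-counting argument, which is a welcome but minor addition). You also silently correct the index range to $\ell\in[2^k-k-1]$, which is the right dimension of the Hamming code.
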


\begin{proof}
Let \( \mathcal{C} \) denote the Hamming code. Its dual, \( \mathcal{C}^\perp \), is the \( [2^k - 1,\, k,\, 2^{k-1}] \) Simplex code, with minimum distance \( d^\perp = 2^{k-1} \). Substituting into the bound, we obtain
\begin{align*}
\lambda_{\ell}^{\max} &\,\le\, 1 + \frac{n - 1}{d^\perp - 1} = 1 + \frac{2^k - 2}{2^{k-1} - 1} = 3
\end{align*}

To show that the bound is tight for all \( \ell \in [k] \), note that
\begin{itemize}
    \item All nonzero codewords in the Simplex code \( \mathcal{C}^\perp \) have weight \( 2^{k-1} \), and hence are minimum-weight codewords.
    \item The supports of these codewords form a \( 2 \)-design~\cite{t_design:journals/ccds/DingT21}.
\end{itemize}

Thus, the achievability condition in~Corollary~\ref{coro:2desgin} is satisfied, and the upper bound is attained.
\end{proof}
\section{Conclusion}
We presented a unified framework for analyzing the service-rate region (SRR) of linear codes through the lens of orthogonal parity checks and combinatorial design theory. By establishing upper and lower bounds on the maximal achievable rate for each data object, we linked SRR analysis with the error-correction capability of one-step MLD. Our results recover known SRR characterizations for various codes and provide the first SRR description for Hamming codes. Furthermore, we showed that the achievability of the upper bound is guaranteed when minimum-weight dual codewords form a 2-design. This connection between SRR structure and classical design theory opens new directions for understanding and optimizing coded storage systems.
\section*{Acknowledgment}
This work was supported in part by NSF-BSF
grant FET-2120262. The authors thank Andrea Di Giusto for helpful discussions. 

\balance
\bibliography{bibliography_ISIT}
\bibliographystyle{IEEEtran}

\begin{IEEEbiography}{Michael Shell}
Biography text here.
\end{IEEEbiography}

\end{document}